\providecommand{\U}[1]{\protect\rule{.1in}{.1in}}
\newtheorem{theorem}{Theorem}
\newtheorem{example}[theorem]{Example}
\newtheorem{lemma}[theorem]{Lemma}
\newtheorem{proposition}[theorem]{Proposition}
\newtheorem{remark}[theorem]{Remark}
\newenvironment{proof}[1][Proof]{\noindent\textbf{#1.} }{\ \rule{0.5em}{0.5em}}
\begin{document}

\begin{center}
{\LARGE Some new results on Duffie-type OTC markets}

\bigskip

BY ALAIN$\ $B\'{E}LANGER$^{\ast}${\large ,\ }GASTON GIROUX$^{\ast}%
${\small \ \bigskip}AND NDOUN\'{E}\ NDOUN\'{E}$^{\ast}$

\textit{Universit\'{e} de Sherbrooke}
\end{center}

\bigskip

Abstract: {\small The extended Wild sums considered in this article generalize
the classical Wild sums of statistical physics. We first show
how to obtain explicit solutions for the evolution equation of a large system
where the interactions are given by a single, but general, interacting kernel
which involves }${\small m}$ {\small components, for a fixed }$m\geq
2.${\small We then show how to retain the explicit formulas for the case of
OTC market models where the dynamics is more directly described by two (or
more) kernels. }

\section{Introduction}

After the publication of  M. Kac's work (1956) [8], there was a renewed interest for the results of E. Wild (1951) [15]. This interest was mainly focused on the random matching of a large population of particles forming a diluted Maxwell gas. Here we develop an approach inspired by this body of work. To do so, we start with a
sequence of dynamical sets of interacting components, one for each integer
$N.$ For these dynamical systems we can show that when $N$ is large the
probability is very small that a component has interacted more than once,
directly or indirectly, up to time $t,$ with any other component. Thanks to
this fundamental property, we can link the microscopic and macroscopic levels
using results from the theory of continuous-time Markov chains.

The Wild sum is a series construction which gives the solution of a given
evolution equation in the statistical physics of gases as first appeared in
the work of E. Wild [15]. Note that the classical expression of a Wild sum is described  by binary
trees. Inspired by these ideas, S. Tanaka [12] and H. Tanaka [11] defined an
extension of Wild's sum for solving certain non-linear differential equations
of spaces of measures, so the expression of this sum is described by appropriate
trees. However, the problem of showing the existence of these sums remains
wide open in general.

The recursive time relaxed Monte Carlo methods of Trazzi, Pareschi and
Wennberg [14] are based on generalized Wild sums. However, the lack of
explicit formulas for these sums constitutes a handicap for the efficiency of the above

\bigskip

\noindent$\overline{%
\begin{array}
[c]{l}%
\text{{\small Dated 7 March 2015.}}\\
^{\ast}:\text{{\small D\'{e}partement de finance, Universit\'{e} de
Sherbrooke,}}%
\end{array}
}$\\
\noindent{\small Sherbrooke, Canada, J1K 2R1. E-mails:
alain.a.belanger@usherbrooke.ca,\\ 
gasgiroux@hotmail.com, ndoune.ndoune@usherbrooke.ca}\\
{\small \noindent}\noindent{\small AMS classifications:\ 60G55, 34A34,
82C31.\\}
{\small \noindent}\noindent{\small Keywords: Large interacting sets, Ordinary
Differential Equations, continuous-time Markov chains.}

\newpage \noindent{ methods as well as others also based on extended Wild sums
(see [13], for instance).}

Carlen et al [4] obtain Wild sum formulas which are quite explicit for the solution of
the Kac equation. Their binary trees are obtained, in the spirit of McKean,
from commutator formulas for Lie algebras, leading them to groupings of
interaction trees. Consequently, our more general interaction trees are
different form theirs even in the binary case.

The aim of this paper is to propose a combinatorial formula for extended Wild
sums which are solutions of certain evolution equations and more precisely in
the context of interactions involving $m$ components, $m\geq2$.

In section 3 of B\'{e}langer-Giroux [1], the explicit formulas for the Wild sums were used to obtain the convergence of the solution of the evolution equation to a steady state. This is one of the important applications permitted by the tractability of our explicit formulas.

The article is organized as follows. In section 2, we introduce the types of
combinatorial trees which are going to be useful in the expression of the
solution of the evolution equation in terms of interaction trees. In section 3
and 4 we consider interactions involving $m$ components, for $m\geq2,$ and we
suppose that the intensities of these dynamics have an adequate dependence on
$N$. Our techniques enable us to obtain an explicit formula for the solution of the associated
system of differential equations. In section 4, we show how to retain the
explicit formulation of the solutions in the case of OTC market models
described by two kernels.

\section{\bigskip Combinatorial trees}

We assume that the reader is familiar with the basic definitions of trees. A
rooted tree is a tree with a designated node called the root. A rooted tree in
which the rooted node has one child is a planted tree. An $m$-ary tree is a
rooted tree where each of its node is either a leaf (that is, it has no child)
or it has exactly $m$ children. The leafs are called external nodes and those
nodes with $m$ children, internal nodes. Note that we do not consider the root of the tree as an internal node.\newline An ordered tree is a rooted
tree in which the children of each node are assigned a fixed ordering.\newline
A rooted tree is called an $(m,1)$-ary tree if each internal node has either
one child or exactly $m$ children. In this article, we will work with ordered
$m$-ary trees and ordered $(m,1)$-ary trees.

Let $\mathbb{A}_{n}$ denote the set of $m$-ary ordered 
trees with $n$ internal nodes. Each tree in $\mathbb{A}_{n}$ has $(m-1)n+1$ leaves and each tree can be obtained by adding an internal node on a leaf of a tree in $\mathbb{A}_{n-1}$ (taking into account the order). Hence the number of trees in $\mathbb{A}_{n}$ is
 $\#_{m}(n)$ $=%
 %TCIMACRO{\dprod \limits_{k=1}^{n-1}}%
 %BeginExpansion
 {\displaystyle\prod\limits_{k=1}^{n-1}}
 %EndExpansion
 ((m-1)k+1)$. 

\section{The dynamics}

Let $N$ be the (large) number of interacting components. Let $m$ ($m\geq2$) be
the fixed number of components involved in each interaction. We suppose that
all components take their values in a measurable space, $(E,\mathcal{E}),$
(one can think of ($%
%TCIMACRO{\U{211d} }%
%BeginExpansion
\mathbb{R}
%EndExpansion
^{d},B(%
%TCIMACRO{\U{211d} }%
%BeginExpansion
\mathbb{R}
%EndExpansion
^{d})$ or simply a finite set$)$ and their interactions are given by a
symmetric probability kernel $Q$ on the product space ($E^{m},\mathcal{E}%
^{\otimes m}$ ) $.$ That is, the function $Q(x_{1},x_{2},...x_{m};C_{1}%
\times\cdot\cdot\cdot\times C_{m})$: is measurable in $(x_{1},x_{2},...x_{m});$
is a probability measure in ($C_{1}\times\cdot\cdot\cdot\times C_{m});$ and
satisfies $Q(x_{1},x_{2},...x_{m};C_{1}\times\cdot\cdot\cdot\times
C_{m})=Q(x_{\sigma(1)},x_{\sigma(2)},...x_{\sigma(m)};C_{\sigma(1)}\times
C_{\sigma(2)}\times\cdot\cdot\cdot\times C_{\sigma(m)})$ for any permutation
$\sigma$ of $\{1,2,...,m\}.$ 

In the following example, we simplify the model of Duffie-G\^{a}rleanu-Pedersen [6] by keeping only their binary interacting kernel.

\begin{example}
.\ Investors in this model have two liquidity
states denoted $h,$ for high, and $l$, for low. Moreover,there is an asset of common interest to these investors who either own
the asset (denoted by $o)$ or don't (denoted by $n$).\ So $E=\mathbb{\{(}l,n),(l,o),(h,n),(h,0)\}$ describes the state space. The kernel
is defined by $Q_{2}(\cdot,\cdot;C_{1}\times C_{2})=0$ except for
\[
Q_{2}((h,n),(l,o);C_{1}\times C_{2})=Q_{2}((l,o),(h,n);C_{2}\times
C_{1})=\delta_{(h,o)}(C_{1})\delta_{(l,n)}(C_{2})
\]
where $\delta_{z_{0}}$ is the Dirac function $\delta_{z_{0}}(z)=1$ iff
$z=z_{0}$ and $\delta_{z_{0}}(z)=0$ otherwise. The binary kernel implements
the trading of the asset whenever a low liquidity investor who owns the asset
meets a high liquidity investor who does not yet hold it.
\end{example}

The interactions occur at each jump of a Poisson process with intensity
$\lambda\frac{N}{m}$. \ Groups are undistinguishable so each group has a
probability of $\binom{N}{m}^{-1}$ of being involved in a given interaction.

The kernel $Q$ allows us to describe the macroscopic evolution of the system
with an associated system of non-linear differential equations via the
evolution of the law of a component. This probability law, denoted $\mu_{t}, $
evolves with time and is in fact the solution of the Cauchy problem:%
\[
\frac{d\mu_{t}}{dt}=\lambda (\mu_{t}^{\circ_{m}}-\mu_{t})\;;\mu_{0}=\mu
\]

where

\bigskip%
\[
\mu^{\circ_{m}}(C)\triangleq%
%TCIMACRO{\dint \limits_{\mathbb{R} ^{m}}}%
%BeginExpansion
{\displaystyle\int\limits_{\mathbb{R} ^{m}}}
%EndExpansion
\mu(dx_{1})\mu(dx_{2})...\mu(dx_{m})Q(x_{1},x_{2},...x_{m};C\times
E^{m-1})\text{ for }C\in\mathcal{E}.
\]

$\ $ The probability law $\mu^{\circ_{m}}$is the law of a component after the
interaction of $m$ i.i.d. components with law $\mu.$ We can think of it as the
law at the root of the $m$-ary tree with only one interaction. We will look at
all the trees representing the interaction history of a component up to time
$t$. So for a tree, $A$, with more than one interaction, we divide the tree in
$m$ subtrees at that last interaction and continue recursively up to time 0 to
define $\mu^{\circ_{m}A}$ . (Please see figure 1 for a simple example of an
interaction tree.) Let $\mathbb{A}_{n}$ be the set of all trees with $n$
interactions (a.k.a. nodes), each node producing $m$ branches. If $A_{n}%
\in\mathbb{A}_{n}$, then $\mu^{\circ_{m}A_{n}}$ denotes the law obtained by
iteration of $\mu^{\circ_{m}}$ through the successive nodes of the tree when
we place the law $\mu$ on each leaf of $A_{n}.$%

%TCIMACRO{\FRAME{ftbpF}{5.4743in}{4.3016in}{0in}{}{}{dessinotctreesimple.png}%
%{\special{ language "Scientific Word";  type "GRAPHIC";
%maintain-aspect-ratio TRUE;  display "USEDEF";  valid_file "F";
%width 5.4743in;  height 4.3016in;  depth 0in;  original-width 5.4163in;
%original-height 4.2497in;  cropleft "0";  croptop "1";  cropright "1";
%cropbottom "0";  filename 'DessinOTCTreeSimple.png';file-properties "XNPEU";}%
%}}%
%BeginExpansion
\begin{figure}
[ptb]
\begin{center}
\includegraphics[
natheight=4.249700in,
natwidth=5.416300in,
height=4.3016in,
width=5.4743in
]%
{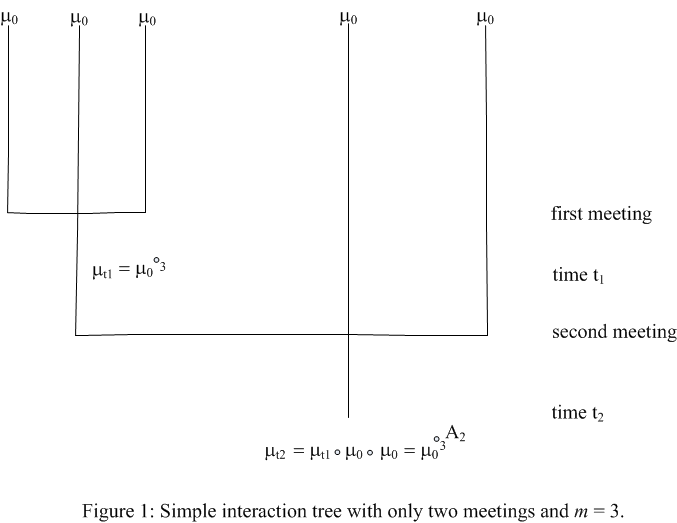}%
\end{center}
\end{figure}
%EndExpansion

\bigskip

\bigskip

\bigskip

We have shown B\'{e}langer-Giroux [1] that the Cauchy problem has a unique
solution which can be expressed, by conditioning on the number of interactions
up to time $t,$ and then by the component's history. Such conditionings give us%

\[
\mu_{t}=%
%TCIMACRO{\dsum \limits_{n\geq0}}%
%BeginExpansion
{\displaystyle\sum\limits_{n\geq0}}
%EndExpansion
p_{n}(t)\frac{1}{\#_{m}(n)}%
%TCIMACRO{\dsum \limits_{A_{n}\in\mathbb{A}_{n}}}%
%BeginExpansion
{\displaystyle\sum\limits_{A_{n}\in\mathbb{A}_{n}}}
%EndExpansion
\mu^{\circ_{m}A_{n}}\text{ \ \ \ \ \ \ }(1)
\]

where $\#_{m}(n)$ $=%
%TCIMACRO{\dprod \limits_{k=1}^{n-1}}%
%BeginExpansion
{\displaystyle\prod\limits_{k=1}^{n-1}}
%EndExpansion
((m-1)k+1)$ is the number of trees with $n$ nodes, taking into account their
branching orders; and $p_{n}(t)$ $=\frac{\#_{m}(n)}{(m-1)^{n}n!}%
e^{-\lambda t}(1-e^{-(m-1)\lambda t})^{n}$ is the probability of having $n$ branchings up to
time $t.$

\begin{remark}
We call the law $\mu_{t}=%
%TCIMACRO{\dsum \limits_{n\geq0}}%
%BeginExpansion
{\displaystyle\sum\limits_{n\geq0}}
%EndExpansion
e^{-\lambda t}(1-e^{-(m-1)\lambda t})^{n}\frac{1}{(m-1)^{n}n!}%
%TCIMACRO{\dsum \limits_{A_{n}\in\mathbb{A}_{n}}}%
%BeginExpansion
{\displaystyle\sum\limits_{A_{n}\in\mathbb{A}_{n}}}
%EndExpansion
\mu^{\circ_{m}A_{n}}$ an explicit extended Wild sum [15] and note that the
convex combination we obtain for the case $m=2$ is indeed the Wild sum,
$\mu_{t}=%
%TCIMACRO{\dsum \limits_{n\geq0}}%
%BeginExpansion
{\displaystyle\sum\limits_{n\geq0}}
%EndExpansion
e^{-\lambda t}(1-e^{-\lambda t})^{n}\frac{1}{n!}%
%TCIMACRO{\dsum \limits_{A_{n}\in\mathbb{A}_{n}}}%
%BeginExpansion
{\displaystyle\sum\limits_{A_{n}\in\mathbb{A}_{n}}}
%EndExpansion
\mu^{\circ_{m}A_{n}}$, now well-known in the statistical physics of gases
since the work of Kac (1956) [8].
\end{remark}

\subsection{\bigskip Using interaction trees to go from the microscopic to the
macroscopic.}

In all our cases, we have an underlying market structure which is a Kac walk
with interactions involving $m$ agents$.$ We add exponential times to obtain a
marked Poisson process whose marks are horizontal lines linking the agents
participating in a given interaction. This enabled us, in B\'elanger-Giroux [1], to describe the limit
law of an agent, under an appropriate conditioning, as a countable convex
combination on trees which is, as we have shown in section 3 of that article, the global
solution of the associated differential equation on the space of probability laws.

Here we first explain how we came to that convex combination since it serves as a tool to study the other models which follow. It is the tool that enables us, for instance, to state proposition 5. Its proof follows the lines of the proof of the main result in B\'elanger-Giroux [1].

We start our study by
an analysis of the dynamics of the intrinsic structure of the large set of
interacting agents when the number of agents increases. We assume that each
interaction involves $m$ agents, $m\geq2$. More specifically, we consider a
set of $N$ agents whose interactions happen at unexpected times so these
interactions' occurrences follow a Poisson process$.$ Since agents are
interchangeable, each group has an equal probability of meeting of $\left(
\begin{array}
[c]{c}%
N\\
m
\end{array}
\right)  ^{-1}.$ If we suppose the intensity of the meetings to be $\frac
{N}{m}$ then each agent has a meeting rate $\lambda$ which can be assumed to
equal $1$ under a time change. We will make this assumption, $\lambda=1$, all throughout section 3.

For $N$ fixed and starting at time $0$, we assign a vertical position to each
agent. The down movement represents the passage of time, see figure 1 on page
4. Each time a group of agents interacts, we draw a horizontal line between
those agents and we draw a vertical line at each agent's position connecting
$0$ to the horizontal line just drawn, so we see a random graph being formed.
When we stop this graph at time $t$, we obtain the finite graph of all
interactions that have taken place. Moreover, the history up to time $t$ of a
given agent, call it $P$, is described by the random graph connecting all
agents who have interacted directly or indirectly with $P$. 

\begin{figure}
[ptb]
\begin{center}
\includegraphics[
natheight=4.249700in,
natwidth=5.416300in,
height=4.3016in,
width=5.4743in
]%
{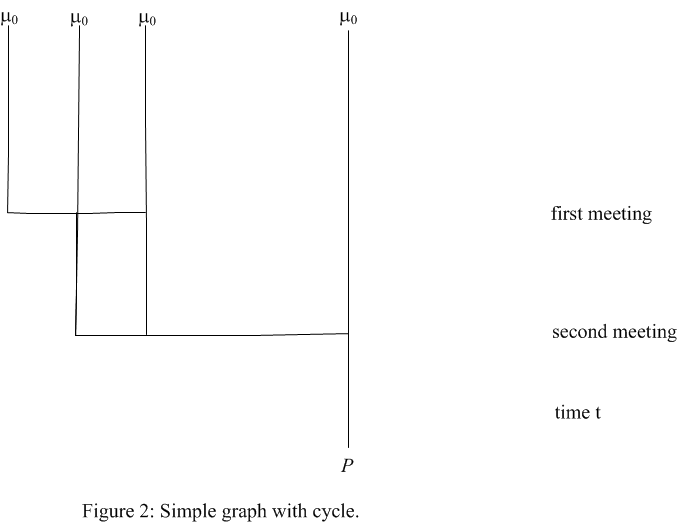}%
\end{center}
\end{figure}

The number of meetings is random but we can condition on it. The law of the
finite graph is reversible since the meeting times are uniform on $[0,t]$. We
want to show that a random graph representing the history of $P$ can be
replaced by a random tree as the number of agents, $N$, grows. If we look at
figure 2, we see that the inclusion in the second meeting of one of the
investors having participated in the first one would create a cycle in
our graph. As $N$ grows though, the chance of meeting an investor
previously encountered directly or indirectly tends to zero.

To see this, let us consider the graph of $P$'s history up to time $t$.
Starting at time $t,$ \ we pursue each one of the encountered vertical lines
in $P$'s history backward in time until we reach the next horizontal line. If
the inclusion of the horizontal line in our graph does not create a cycle
(i.e. no pair of investors were involved directly or indirectly in a previous
meeting) we include the line, if not we remove it. Proceeding in this fashion
up to time $0$ we get a tree with $n$ internal nodes, say, which has the same law
as the law of a tree obtained by a pure-birth process. The tree obtained by a
sample history of $P$'s interactions is an $m$-ary tree.

These trees grow randomly in time: each time a new node appears, corresponding
to the occurrence of a meeting of investors at that time. We recall that $\mathbb{A}_{n}$ denotes the set of $m$-ary ordered trees with $n$ internal nodes. Then $\mathbb{A}_{n}$ constitutes a set of random trees if we
assume that every $m$- ary tree in $\mathbb{A}_{n}$ is equally likely, namely
of probability $\frac{1}{\#_{m}(n)}.$

The tree starting at $P^{\prime}$s vertical line at time $t$ with
intensity $1$ and which at time $0$ has intensity $(m-1)n+1$ and that same
number of leaves. Between two branchings of this process a graph representing
$P$'s meeting history can have a random number of additional horizontal lines
following a Poisson law of parameter at most $\frac{N}{m}\left(  \left(
\begin{array}
[c]{c}%
(m-1)n+1\\
2
\end{array}
\right)  \left(
\begin{array}
[c]{c}%
N\\
m
\end{array}
\right)  ^{-1}\right)  $. We will now bound the expectation of these
supplementary horizontal lines by a majorant which tends to $0$ as $N$
increases. Indeed, since the mean number of redundant lines when there are $n
$ branchings up to time $t$ is at most $\frac{N}{m}\left(  \left(
\begin{array}
[c]{c}%
(m-1)n+1\\
2
\end{array}
\right)  \left(
\begin{array}
[c]{c}%
N\\
m
\end{array}
\right)  ^{-1}\right)  ,$ we have that the mean number of redundant horizontal
lines is bounded above by
\[%
%TCIMACRO{\dsum \limits_{n\geq0}}%
%BeginExpansion
{\displaystyle\sum\limits_{n\geq0}}
%EndExpansion
\frac{N}{m}\left(  \left(
\begin{array}
[c]{c}%
(m-1)n+1\\
2
\end{array}
\right)  \left(
\begin{array}
[c]{c}%
N\\
m
\end{array}
\right)  ^{-1}\right)  p_{N,n}(t),
\]
where $p_{N,n}(t)$ is the probability of having $n$ branchings up to time $t$
of the pure birth process with successive branching waiting times following
exponential laws of parameter
\[
\lambda_{N,n}=\frac{N}{m}(\left(  m-1)n+1\right)  \left(
\begin{array}
[c]{c}%
N-((m-1)n+1)\\
m-1
\end{array}
\right)  \left(
\begin{array}
[c]{c}%
N\\
m
\end{array}
\right)  ^{-1}
\]

\bigskip Since
\begin{align*}
&  \frac{N}{m}(\left(  m-1)n+1\right)  \left(
\begin{array}
[c]{c}%
N-((m-1)n+1)\\
m-1
\end{array}
\right)  \left(
\begin{array}
[c]{c}%
N\\
m
\end{array}
\right)  ^{-1}\\
&  =\frac{((m-1)n+1)\binom{N-((m-1)n+1)}{m-1}}{\binom{N-1}{m-1}}\text{
\ \ \ \ \ \ \ \ \ \ \ }(2)\\
&  \leq(m-1)n+1
\end{align*}
then $p_{N,n}(t)$ is stochastically smaller than the law obtained with the
intensities $\lambda_{n}=(m-1)n+1,$ which in turn are less than the
intensities $\overline{\lambda}_{n}=m(n+1).$ Its transition kernel is then
obtained by solving Kolmogorov's affine system of equations:%

\begin{align*}
\frac{d\overline{p}_{t}(0)}{dt}  &  =-m\overline{p}_{t}(0)\text{
\ \ \ \ \ \ \ \ \ \ \ \ \ \ \ \ \ \ \ \ \ \ \ \ \ \ \ \ \ \ \ \ \ \ \ \ }%
\ \text{\ \ \ \ \ \ \ \ \ \ \ \ \ \ \ }\\
\frac{d\overline{p}_{t}(n)}{dt}  &  =mn\overline{p}_{t}(n-1)-m(n+1)\overline
{p}_{t}(n)\text{ \ ; }n\geq1\text{.}%
\end{align*}
Thus the latter intensities give us a geometric law $\overline{p}%
_{t}(n)=e^{-mt}(1-e^{-mt})^{n}\ 
=e^{-m(n+1)t}(e^{mt}-1)^{n}$. Since geometric laws have finite moments of
all orders, the mean number of redundant horizontal lines is bounded above by
a quantity converging to $0$.

For more details on Kolmogorov systems of equations for pure birth processes
we refer the reader to Lefebvre [10], for instance.

Thus, after having specified the initial agents' states and their interaction
kernels, we can approximate $P^{\prime}s$ law using the tree obtained from
removing all redundant horizontal lines from its graph. We will use this fact
in the next sub-section.

\subsection{Limit countable convex combination}

\bigskip We will now show that these random trees whose branching intensities
depend on $N$ can be approximated by trees with branching intensities
independent of $N$. Taking into account that $P$'s tree history is random with
intensities depending on $N,$ we could write $P$'s law, denoted by $\mu
_{t}^{\ast,N},$ with complex formulae depending on $N.$ Since our markets have
a large number of investors, it is preferable instead to work with the limit
of these laws. We note from $(2)$ above that for each $n$, $\lambda_{N,n}\rightarrow
((m-1)n+1)$ as an increasing sequence in $N$.

\bigskip Let $p_{n}(t)$ $(\triangleq p_{t}(n))$ be the solution of the affine
Kolmogorov system of equations:
\begin{align*}
\frac{dp_{t}(0)}{dt}  &  =-p_{t}(0)\text{
\ \ \ \ \ \ \ \ \ \ \ \ \ \ \ \ \ \ \ \ \ \ \ \ \ \ \ \ \ \ \ }(3)\\
\frac{dp_{t}(n)}{dt}  &  =((m-1)(n-1)+1)p_{t}(n-1)-((m-1)n+1)p_{t}(n)\text{
\ ; }n\geq1\text{.}%
\end{align*}

Recall fron the first section that $\mu_{t}=%
%TCIMACRO{\dsum \limits_{n\geq0}}%
%BeginExpansion
{\displaystyle\sum\limits_{n\geq0}}
%EndExpansion
p_{n}(t)\frac{1}{\#_{m}(n)}%
%TCIMACRO{\dsum \limits_{A_{n}\in\mathbb{A}_{n}}}%
%BeginExpansion
{\displaystyle\sum\limits_{A_{n}\in\mathbb{A}_{n}}}
%EndExpansion
\mu^{\circ_{m}A_{n}}$.

\begin{proposition}
The sequence of \ laws $\mu_{t}^{\ast,N}$ converges to $\mu_{t}$ as $N$ increases.
\end{proposition}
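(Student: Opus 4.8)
The plan is to combine three facts already established above. Fix $t\geq0$ and work throughout in total variation on $(E,\mathcal{E})$ (which of course implies weak convergence). For fixed $N$, let $G^{N}_{t}$ be the random graph of $P$'s interaction history up to time $t$ and let $T^{N}_{t}$ be the $m$-ary tree obtained from it by deleting all redundant horizontal lines as in the previous subsection; the two agree off an event $\Omega^{N}_{t}$. By the argument of B\'elanger-Giroux [1], conditionally on the pruned tree being the actual history (that is, on the complement of $\Omega^{N}_{t}$), $P$'s law is $\mu^{\circ_{m}T^{N}_{t}}$ with the law $\mu$ placed at the leaves. Writing $\mu^{\ast,N}_{t}$ and $\mathbb{E}[\mu^{\circ_{m}T^{N}_{t}}]$ each as their restriction to the no-cycle event plus the rest, the no-cycle parts coincide, so the two probability measures differ in total variation by at most $2\,\mathbb{P}(\Omega^{N}_{t})$; and $\mathbb{P}(\Omega^{N}_{t})$ is bounded by the expected number of redundant horizontal lines, which — via the uniform domination of the pure-birth counts by the geometric law $\overline{p}_{t}(n)=e^{-mt}(1-e^{-mt})^{n}$ — was shown there to converge to $0$ as $N\to\infty$.

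Next I would identify $\mathbb{E}[\mu^{\circ_{m}T^{N}_{t}}]$ with a Wild-type sum carrying $N$-dependent weights. Again by the reversibility/exchangeability considerations of [1] (recalled in the previous two subsections), $T^{N}_{t}$ has $n$ internal nodes with probability $p_{N,n}(t)$, where $p_{N,\cdot}(t)$ is the pure-birth law with intensities $\lambda_{N,n}$, and conditionally on having $n$ nodes it is uniform on $\mathbb{A}_{n}$. Hence
\[
\mathbb{E}\bigl[\mu^{\circ_{m}T^{N}_{t}}\bigr]=\sum_{n\geq0}p_{N,n}(t)\,\frac{1}{\#_{m}(n)}\sum_{A_{n}\in\mathbb{A}_{n}}\mu^{\circ_{m}A_{n}} .
\]
By $(2)$ we have $\lambda_{N,n}\uparrow(m-1)n+1$ for each $n$, and since the pure-birth distribution at time $t$ depends continuously on the finitely many rates that govern reaching a given state — equivalently, by continuity in the coefficients of the solution of the affine Kolmogorov system $(3)$, or by a direct coupling of the two pure-birth chains — we get $p_{N,n}(t)\to p_{n}(t)$ for every fixed $n$. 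As $p_{N,\cdot}(t)$ and $p_{\cdot}(t)$ are probability mass functions on $\mathbb{N}$, Scheff\'e's lemma (or, directly, the uniform tail bound coming from the geometric domination) gives $\sum_{n\geq0}\lvert p_{N,n}(t)-p_{n}(t)\rvert\to0$. Each inner average $\frac{1}{\#_{m}(n)}\sum_{A_{n}\in\mathbb{A}_{n}}\mu^{\circ_{m}A_{n}}$ is a probability measure, so the displayed sum converges in total variation to $\mu_{t}$; combined with the first paragraph this yields $\mu^{\ast,N}_{t}\to\mu_{t}$.

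I expect the main obstacle to be the passage from the termwise convergence $p_{N,n}(t)\to p_{n}(t)$ to convergence of the whole series. This is precisely where the uniform stochastic domination of the pure-birth counts by the geometric law $\overline{p}_{t}$ from the previous subsection does the work: it controls the tails of the Wild-type sums uniformly in $N$, and incidentally it is also what guarantees that $\mu_{t}$ is a genuine probability law (not merely a formal series) and what underlies the estimate $\mathbb{P}(\Omega^{N}_{t})\to0$. The remaining delicate point, which is carried out in [1] and only invoked here, is the identification on the no-cycle event of the conditional law of $P$'s pruned history with the uniform law on $\mathbb{A}_{n}$ together with the placement of $\mu$ at the leaves.
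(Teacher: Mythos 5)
Your proposal is correct and its core coincides with the paper's proof: termwise convergence $p_{N,n}(t)\to p_{n}(t)$ (which the paper simply cites from Kurtz [9] rather than arguing via continuity in the rates or a coupling) is upgraded to convergence of the full Wild-type sums by a truncation argument exploiting that the weights are probability mass functions and the inner tree-averages are probability measures --- exactly your Scheff\'e step. Your first paragraph, comparing the true microscopic law with the pruned-tree law via the cycle event, is not part of the paper's proof, since the paper takes $\mu_{t}^{\ast,N}$ to be the already-pruned, tree-based law whose weights are the pure-birth probabilities $p_{N,n}(t)$, the pruning having been justified in the preceding subsection.
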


\bigskip

\begin{proof}
By Kurtz [9], we have that $p_{N,n}(t)\rightarrow p_{n}(t)$ as $N$ increases.
But $(p_{n}(t))_{n\geq0}$ is a probability law, so for $\epsilon>0,$ there
exists $n(\epsilon)$ such that

$%
%TCIMACRO{\dsum \limits_{n\geq n(\epsilon)}}%
%BeginExpansion
{\displaystyle\sum\limits_{n\geq n(\epsilon)}}
%EndExpansion
p_{n}(t)<\epsilon.$ Now let $N(\epsilon)$ be such that $N>N(\epsilon)$ implies
that $|p_{N,n}(t)-p_{n}(t)|<\frac{\epsilon}{n(\epsilon)}$ for $0\leq n\leq
n(\epsilon).$We then have for $C\in\mathcal{E}$ and $N>N(\epsilon)$%

\[
|\mu_{t}^{\ast,N}(C)-\mu_{t}(C)|\leq%
%TCIMACRO{\dsum \limits_{n=0}^{n(\epsilon)}}%
%BeginExpansion
{\displaystyle\sum\limits_{n=0}^{n(\epsilon)}}
%EndExpansion
|p_{N,n}(t)-p_{n}(t)|+2\epsilon\leq3\epsilon
\]
since $\frac{1}{\#_{m}(n)}%
%TCIMACRO{\dsum \limits_{A_{n}\in\mathbb{A}_{n}}}%
%BeginExpansion
{\displaystyle\sum\limits_{A_{n}\in\mathbb{A}_{n}}}
%EndExpansion
\mu^{\circ_{m}A_{n}}(C)\leq1$ and ($p_{N,n}(t))_{n\geq0}$ are probability
laws. Our claim is proved.
\end{proof}

\begin{lemma}
$p_{t}(n)=\frac{\#_{m}(n)}{(m-1)^{n}n!}e^{-t}(1-e^{-(m-1)t})^{n}$
\end{lemma}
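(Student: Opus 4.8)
The plan is to prove the formula by induction on $n$, solving the triangular system $(3)$ one equation at a time, and to reduce the whole inductive step to the single combinatorial identity $\#_m(n)=((m-1)(n-1)+1)\,\#_m(n-1)$, which is immediate from $\#_m(n)=\prod_{k=1}^{n-1}((m-1)k+1)$. Throughout I use the initial condition $p_0(0)=1$ and $p_0(n)=0$ for $n\geq1$ that accompanies $(3)$ (the branching count of the pure birth process starts at $0$). For the base case $n=0$, the first equation of $(3)$ with $p_0(0)=1$ gives $p_t(0)=e^{-t}$, which matches the claimed formula since $\#_m(0)=1$ and $(1-e^{-(m-1)t})^0=1$.

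For the inductive step, suppose $p_t(n-1)=\frac{\#_m(n-1)}{(m-1)^{n-1}(n-1)!}\,e^{-t}(1-e^{-(m-1)t})^{n-1}$. Then the $n$-th equation of $(3)$ is a first-order linear ODE for $p_t(n)$ with known right-hand side; multiplying by the integrating factor $e^{((m-1)n+1)t}$ rewrites its left-hand side as $\frac{d}{dt}\bigl(e^{((m-1)n+1)t}p_t(n)\bigr)$. The substitution $v=e^{(m-1)t}$ then collapses the integrand on the right to a constant multiple of $(v-1)^{n-1}$, whose antiderivative is elementary; the initial condition $p_0(n)=0$ forces the constant of integration to vanish. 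Undoing the substitution and regrouping the exponentials via $(e^{(m-1)t}-1)^n e^{-(m-1)nt}=(1-e^{-(m-1)t})^n$ gives
\[
p_t(n)=\frac{(m-1)(n-1)+1}{(m-1)n}\cdot\frac{\#_m(n-1)}{(m-1)^{n-1}(n-1)!}\,e^{-t}(1-e^{-(m-1)t})^n .
\]
It then remains only to identify the leading constant with $\frac{\#_m(n)}{(m-1)^n n!}$; writing $n!=n(n-1)!$ and $(m-1)^n=(m-1)(m-1)^{n-1}$, this is exactly the identity $\#_m(n)=((m-1)(n-1)+1)\,\#_m(n-1)$, which closes the induction.

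The argument is essentially routine, so there is no real obstacle; the two places that need a little care are (i) carrying out the substitution $v=e^{(m-1)t}$ so that the integrand genuinely simplifies to a multiple of $(v-1)^{n-1}$, and (ii) the bookkeeping of the product $\#_m(n)$, i.e.\ checking that the constants $c_n=\#_m(n)/((m-1)^n n!)$ satisfy the recursion $(m-1)n\,c_n=((m-1)(n-1)+1)\,c_{n-1}$. If one prefers to avoid the integration altogether, an equivalent route is to invoke uniqueness of the solution of the affine system $(3)$ (a standard fact for triangular linear systems, already used implicitly in the excerpt) and simply verify by direct differentiation that the claimed $p_t(n)$ satisfies both the $n$-th equation and the initial condition; that verification again comes down to the same recursion on the $c_n$.
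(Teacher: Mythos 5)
Your proof is correct and follows essentially the same route as the paper's: induction on $n$, variation of constants for the triangular system $(3)$, the observation that the integrand reduces to $e^{(m-1)s}(e^{(m-1)s}-1)^{n-1}$ (whose antiderivative is $\tfrac{1}{(m-1)n}(e^{(m-1)s}-1)^{n}$, which your substitution $v=e^{(m-1)s}$ reproduces), and the recursion $\#_m(n)=((m-1)(n-1)+1)\,\#_m(n-1)$. The only difference is presentational: you spell out the integrating factor and the bookkeeping of the constants $c_n$, which the paper leaves implicit.
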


\bigskip

\begin{proof}
\bigskip We need to solve the affine Kolmogorov system of equations $(3)$.

Proceeding by induction we have:\bigskip%
\begin{align*}
\frac{dp_{t}(0)}{dt}  &  =e^{-t}\\
\frac{dp_{t}(n)}{dt}  &  =((m-1)(n-1)+1)e^{-(n(m-1)+1)t}%
%TCIMACRO{\dint \limits_{0}^{t}}%
%BeginExpansion
{\displaystyle\int\limits_{0}^{t}}
%EndExpansion
e^{(n(m-1)+1)s}p_{s}(n-1)ds\text{ \ }%
\end{align*}
To prove the lemma it suffices to note that $\#_{m}(n)=\#_{m}%
(n-1)((n-1)(m-1)+1)$ and that $e^{(n(m-1)+1)s}e^{-s}(1-e^{-(m-1)s}%
)^{n-1}=e^{(m-1)s}(e^{(m-1)s}-1)^{n-1}$ is the derivative of $\frac{1}%
{(m-1)n}(e^{(m-1)s}-1)^{n}.$
\end{proof}

And this shows that the limit law of $P$ is indeed the extended\ Wild sum
which we have shown (in [1]) to be the solution of the ODE associated to the
interacting system.

\section{\bigskip Explicit formulas for other OTC market models}

In many applications, it is more convenient to work with more than one kernel
to describe the dynamics of the system. It is the case for instance in the
models of Duffie-G\^{a}rleanu-Pedersen [6] and their extensions in
B\'{e}langer-Giroux-Moisan [2] and in B\'{e}langer-Giroux-Ndoun\'{e} [3].

In the simplest such model on $E=\mathbb{\{(}l,n),(l,o),(h,n),(h,0)\}$ we have
the binary kernel we described at the beginning of section 3 and we have the
autonomous changes of liquidity of an investor. Let $\gamma_{u}$ and $\gamma_{d}$ resp. be the intensity of the up movements
(resp. down movements) in liquidity. We will first assume that these intensities are equal (we will remove this assumption at the end of the section) and we let
$\gamma=\gamma_{u}=\gamma_{d}.$ Then $q_{p}(t)=e^{-\gamma t}\frac{(\gamma
t)^{p}}{p!}$ is the probability of having $p$ autonomous movements up to time
$t.$ 
The 1-ary kernel can then be defined by
\begin{align*}
Q_{1}((l,n);C)  &  =\delta_{(h,n)}(C);Q_{1}((l,o);C)=\delta_{(h,o)}(C);\\
Q_{1}((h,n);C)  &  =\delta_{(l,n)}(C);Q_{1}((h,o);C)=\delta_{(l,0)}(C);
\end{align*}

and%
\[
\nu^{\circ_{1}}(C)=\nu(l,n)\delta_{(h,n)}(C)+\nu(l,o)\delta_{(h,o)}%
(C)+\nu(h,n)\delta_{(l,n)}(C)+\nu(h,o)\delta_{(l,0)}(C)
\]
for $C\subset E.$

It is possible to modify the 1-ary kernel into a binary kernel with the
addition of a ``witness" investor who is completely unaffected by the change of
liquidity of the other investor. We then symmetrize that kernel and replace
the two binary kernels with a convex combination of the two kernels to be left
with only one kernel as in the situation we dealt with in the preceding
sections. So all the formalism developed so far is still valid. The drawbacks
of this approach though are that the symmetrization operation gives us a
slightly different dynamics, and more importantly, that we lose the explicit
formulas for the extended Wild sums. The objective of this section is to show
how we can retain them.

Let $\mathbb{K}_{n}^{p}$ denote the set of all arrangements of $p$
undistinguishable objects in $n$ boxes, where a box may contain arbitrarily
many objects. Then $|\mathbb{K}_{n}^{p}|=\binom{n+p-1}{n-1}.$ Let
$\mathbb{A}_{n}$ denote, as before, the set of all random trees with $n$
$m-$ary interactions (the investors meetings). If $A_{n}\in\mathbb{A}_{n}$
then $A_{n}$ has $(m-1)n+1$ leaves which in turn gives $mn+1$ branches.

Let $p$ denote the number of $1-$ary interactions (i.e. the number of
autonomous changes of position). For $\sigma\in\mathbb{K}_{mn+1}^{p},$ let
$A_{n}^{\sigma}$ denote the tree obtained by placing the 1-ary interactions on
each branch of the tree according to the arrangement $\sigma.$ Let
$\widetilde{\mathbb{A}}_{n,p}$ denote the set of trees with $n$ $m$-ary
interactions and $p$ 1-ary ones. Then $\rho:\mathbb{K}_{mn+1}^{p}%
\times\mathbb{A}_{n}\rightarrow\widetilde{\mathbb{A}}_{n,p}:(\sigma
,A_{n})\mapsto A_{n}^{\sigma}$ defines a bijection. Please see figure 2 for simple examples of trees in $\widetilde{\mathbb{A}}_{2,1}.$ Moreover, if we call $\{\sigma^{i}\}_{i=1}^{7}$ the 7 configurations of figure 3, 
then $(\nu^{\circ_{3}A_{2}})^{\circ_{1}}= \sum_{i=1}^{7}\nu^{\circ_{3}A_{2}^{\sigma^{i}}}.$

The set $\widetilde{\mathbb{A}}_{n,p}$ of $(m,1)$-ary ordered trees
with $n$ $m$-ary internal nodes and $p$ 1-ary nodes has the cardinality equal to
$\#_{m}(n)\binom{mn+p}{mn}.$ Then $\widetilde{\mathbb{A}}_{n,p}$ constitutes a
set random trees if we assume that every $(m,1)$-ary tree in $\widetilde
{\mathbb{A}}_{n,p}$ is equally likely, namely with probability $\frac{1}%
{\#_{m}(n)\binom{mn+p}{mn}}.$

%TCIMACRO{\FRAME{ftbpF}{5.1301in}{1.1338in}{0in}{}{}%
%{dessintreewithmovement.png}{\special{ language "Scientific Word";
%type "GRAPHIC";  maintain-aspect-ratio TRUE;  display "USEDEF";
%valid_file "F";  width 5.1301in;  height 1.1338in;  depth 0in;
%original-width 5.0747in;  original-height 1.1in;  cropleft "0";  croptop "1";
%cropright "1";  cropbottom "0";
%filename '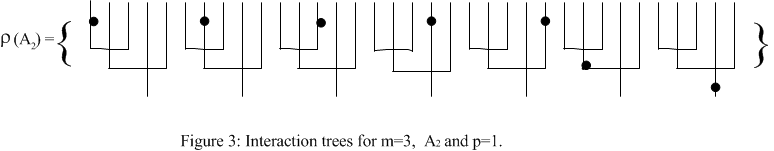';file-properties "XNPEU";}}}%
%BeginExpansion

\begin{figure}
[ptb]
\begin{center}
\includegraphics[
natheight=1.100000in,
natwidth=5.074700in,
height=1.1338in,
width=5.1301in
]%
{DessinOTCTreeWithMovements.png}%
\end{center}
\end{figure}

%EndExpansion

\begin{proposition}
\bigskip The probability measure%
\[
\nu_{t}=%
%TCIMACRO{\dsum \limits_{n\geq0}}%
%BeginExpansion
{\displaystyle\sum\limits_{n\geq0}}
%EndExpansion
p_{n}(t)\frac{1}{\#_{m}(n)}%
%TCIMACRO{\dsum \limits_{A_{n}\in\mathbb{A}_{n}}}%
%BeginExpansion
{\displaystyle\sum\limits_{A_{n}\in\mathbb{A}_{n}}}
%EndExpansion
\left[
%TCIMACRO{\dsum \limits_{p=0}^{\infty}}%
%BeginExpansion
{\displaystyle\sum\limits_{p=0}^{\infty}}
%EndExpansion
\frac{q_{p}(t)}{\binom{mn+p}{mn}}%
%TCIMACRO{\dsum \limits_{\sigma\in\mathbb{K}_{mn+1}^{p}}}%
%BeginExpansion
{\displaystyle\sum\limits_{\sigma\in\mathbb{K}_{mn+1}^{p}}}
%EndExpansion
\nu^{\circ_{m}A_{n}^{\sigma}}\right]  \text{\ \ \ }%
\]
is the solution of the Cauchy problem:
\[
\frac{d\nu_{t}}{dt}=\lambda(\nu_{t}^{\circ_{m}}-\nu
_{t})+\gamma\;(\nu_{t}^{\circ_{1}}-\nu_{t});\nu_{0}%
=\nu.
\]

\end{proposition}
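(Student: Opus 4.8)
The plan is to verify directly that the displayed series solves the Cauchy problem and then to invoke uniqueness: the map $\nu\mapsto\lambda(\nu^{\circ_m}-\nu)+\gamma(\nu^{\circ_1}-\nu)$ is Lipschitz on the (convex, bounded) set of probability measures with the total variation norm, so the initial value problem has at most one solution. First I would record that $\nu_t$ is well defined and is itself a probability measure: writing $R_{n,p}:=\frac{1}{\#_m(n)\binom{mn+p}{mn}}\sum_{A_n\in\mathbb{A}_n}\sum_{\sigma\in\mathbb{K}_{mn+1}^{p}}\nu^{\circ_m A_n^{\sigma}}$, the bijection $\rho$ together with $|\mathbb{A}_n|=\#_m(n)$ and $|\mathbb{K}_{mn+1}^{p}|=\binom{mn+p}{mn}$ shows that $R_{n,p}$ is an average of probability measures, hence (using $\sum_np_n(t)=\sum_pq_p(t)=1$) that $\nu_t=\sum_{n\ge0}\sum_{p\ge0}p_n(t)q_p(t)R_{n,p}$ is a convex combination; absolute and locally uniform convergence of this series and of its formal time-derivative then justifies term-by-term differentiation. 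Since $p_0(0)=q_0(0)=1$ while all other weights vanish at $t=0$, and $R_{0,0}=\nu^{\circ_m A_0}=\nu$, the initial condition $\nu_0=\nu$ holds.

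Next I would differentiate. From the Lemma (equivalently, the rate-$\lambda$ form of the Kolmogorov system (3)) one has $p_n'(t)=\lambda\bigl[((m-1)(n-1)+1)p_{n-1}(t)-((m-1)n+1)p_n(t)\bigr]$, while $q_p(t)=e^{-\gamma t}(\gamma t)^p/p!$ satisfies the pure-birth (Poisson) system $q_p'(t)=\gamma\bigl(q_{p-1}(t)-q_p(t)\bigr)$, with the conventions $p_{-1}\equiv q_{-1}\equiv 0$. Substituting both into $\frac{d\nu_t}{dt}=\sum_{n,p}(p_n'q_p+p_nq_p')R_{n,p}$ and re-indexing the ``source'' terms ($n\mapsto n-1$ in the first, $p\mapsto p-1$ in the second) collapses the derivative to
\[
\frac{d\nu_t}{dt}=\lambda\sum_{n,p}\bigl((m-1)n+1\bigr)p_n(t)q_p(t)\,(R_{n+1,p}-R_{n,p})\;+\;\gamma\sum_{n,p}p_n(t)q_p(t)\,(R_{n,p+1}-R_{n,p}),
\]
so that, recalling $\nu_t=\sum_{n,p}p_nq_pR_{n,p}$, it remains to recognise $\lambda(\nu_t^{\circ_m}-\nu_t)+\gamma(\nu_t^{\circ_1}-\nu_t)$ in the right-hand side.

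Two combinatorial identities accomplish this. The first is the content of the example preceding the statement: applying one autonomous ($1$-ary) move to a uniformly-decorated tree-law produces the uniformly-decorated tree-law with one more $1$-ary node, the node inserted ``at the root'' of $\nu^{\circ_m A_n^{\sigma}}$ being redistributed, through $\rho$, over the $mn+1$ branches; by linearity of $\circ_1$ this matches $\sum_{n,p}p_nq_p(R_{n,p+1}-R_{n,p})$ with $\nu_t^{\circ_1}-\nu_t$. The second is the extended-Wild ``grafting'' recursion already established in B\'elanger-Giroux [1] for the pure $m$-ary dynamics, now carried along with the decorations: the multilinear form $(\nu^{\circ_m A_{n_1}^{\sigma_1}},\dots,\nu^{\circ_m A_{n_m}^{\sigma_m}})^{\circ_m}$ equals $\nu^{\circ_m B}$ for the tree $B$ with $k=1+\sum_i n_i$ internal $m$-ary nodes and $p=\sum_i p_i$ one-ary nodes obtained by joining the $m$ decorated subtrees under a new $m$-ary root, and as the $A_{n_i}$ range over $\mathbb{A}_{n_i}$ and the $\sigma_i$ over $\mathbb{K}_{mn_i+1}^{p_i}$ (with $\sum n_i=k-1$, $\sum p_i=p$) the decorated tree $B$ sweeps out $\widetilde{\mathbb{A}}_{k,p}$ (up to the status of the new root branch); a Vandermonde-type convolution collapsing $\prod_i\binom{mn_i+p_i}{mn_i}$ into $\binom{mk+p}{mk}$, together with the multinomial sums used in the one-kernel case, then produces exactly the coefficients identifying $\sum_{n,p}((m-1)n+1)p_nq_p(R_{n+1,p}-R_{n,p})$ with $\nu_t^{\circ_m}-\nu_t$. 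Combining the two identities yields $\frac{d\nu_t}{dt}=\lambda(\nu_t^{\circ_m}-\nu_t)+\gamma(\nu_t^{\circ_1}-\nu_t)$, and uniqueness finishes the argument.

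The step I expect to be the main obstacle is precisely this bookkeeping on the $mn+1$ branches, namely making the interplay of the two clocks exact: an $m$-ary graft creates a new root branch that is momentarily un-decorated, whereas subsequent $1$-ary moves may populate it, so the two identities above must be shown to be compatible when applied simultaneously rather than in isolation, and one must check that the normalisations $\binom{mn+p}{mn}$ genuinely telescope under the combined operation. Everything else — well-posedness and uniqueness, term-by-term differentiation, the Poisson and pure-birth ODEs, and the reduction to the already-known pure $m$-ary grafting recursion of [1] — is routine once this redistribution of the $1$-ary decorations over the branches of the interaction trees is made precise.
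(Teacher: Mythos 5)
Your proposal follows essentially the same route as the paper: term-by-term differentiation of the double convex combination, the Kolmogorov/Poisson ODEs for the weights $p_n(t)$ and $q_p(t)$, and the two combinatorial identities recognising the $n$-shifted sum as $\nu_t^{\circ_m}$ (the grafting recursion of B\'elanger-Giroux [1], carried along with the $1$-ary decorations) and the $p$-shifted sum as $\nu_t^{\circ_1}$ (redistribution of a new $1$-ary node over the $mn+1$ branches). The bookkeeping you flag as the main obstacle is precisely the step the paper also leaves implicit --- it cites the combinatorics of [1] for the $m$-ary part and simply asserts the identity $\sum_{\sigma\in\mathbb{K}_{mn+1}^{p+1}}\nu^{\circ_m A_n^{\sigma}}=\sum_{\sigma'\in\mathbb{K}_{mn+1}^{p}}(\nu^{\circ_m A_n^{\sigma'}})^{\circ_1}$ for the $1$-ary part --- so your argument is at the same level of detail and rigour as the published proof.
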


\begin{proof}

Let $\mu_t$ be the solution (1) above. Since $|\nu_t|\leq |\mu_t|$ and $\mu_t$ is uniformly summable, the convex
sum $\nu_t$ can be differentiated term by term to obtain:\\

$\frac{d\nu_{t}}{dt}=-\lambda \nu_{t}+$

\[
%TCIMACRO{\dsum \limits_{n\geq0}}%
%BeginExpansion
\lambda e^{-m\lambda t}{\displaystyle\sum\limits_{n\geq1}}
%EndExpansion
(1-e^{-(m-1)\lambda t})^{n-1}\frac{1}{(m-1)^{n-1}(n-1)!}%
%TCIMACRO{\dsum \limits_{A_{n}\in\mathbb{A}_{n}}}%
%BeginExpansion
{\displaystyle\sum\limits_{A_{n}\in\mathbb{A}_{n}}}
%EndExpansion
\left[
%TCIMACRO{\dsum \limits_{p=0}^{\infty}}%
%BeginExpansion
{\displaystyle\sum\limits_{p=0}^{\infty}}
%EndExpansion
\frac{q_{p}(t)}{\binom{mn+p}{mn}}%
%TCIMACRO{\dsum \limits_{\sigma\in\mathbb{K}_{mn+1}^{p}}}%
%BeginExpansion
{\displaystyle\sum\limits_{\sigma\in\mathbb{K}_{mn+1}^{p}}}
%EndExpansion
\nu^{\circ_{m}A_{n}^{\sigma}}\right]\text{\ \ \ }%
\]
\[
-\gamma \nu_{t}+%
%TCIMACRO{\dsum \limits_{n\geq0}}%
%BeginExpansion
{\displaystyle\sum\limits_{n\geq0}}
%EndExpansion
p_{n}(t)\frac{1}{\#_{m}(n)}%
%TCIMACRO{\dsum \limits_{A_{n}\in\mathbb{A}_{n}}}%
%BeginExpansion
{\displaystyle\sum\limits_{A_{n}\in\mathbb{A}_{n}}}
%EndExpansion
\left[
%TCIMACRO{\dsum \limits_{p=0}^{\infty}}%
%BeginExpansion
{\displaystyle\sum\limits_{p=1}^{\infty}}
%EndExpansion
\frac{e^{-\gamma t}(\gamma t)^{p-1}}{(p-1)! \binom{mn+p}{mn}}%
%TCIMACRO{\dsum \limits_{\sigma\in\mathbb{K}_{mn+1}^{p}}}%
%BeginExpansion
{\displaystyle\sum\limits_{\sigma\in\mathbb{K}_{mn+1}^{p}}}
%EndExpansion
\nu^{\circ_{m}A_{n}^{\sigma}}\right]  \text{\ \ \ }%
\]

Now using the combinatorial results of the proof of theorem 1 in B\'{e}langer-Giroux
[1], we have that $\nu_{t}^{\circ_{m}}$ is equal to the expression

\[
%TCIMACRO{\dsum \limits_{n\geq0}}%
%BeginExpansion
\lambda e^{-m\lambda t}{\displaystyle\sum\limits_{n\geq1}}
%EndExpansion
(1-e^{-(m-1)\lambda t})^{n-1}\frac{1}{(m-1)^{n-1}(n-1)!}%
%TCIMACRO{\dsum \limits_{A_{n}\in\mathbb{A}_{n}}}%
%BeginExpansion
{\displaystyle\sum\limits_{A_{n}\in\mathbb{A}_{n}}}
%EndExpansion
\left[
%TCIMACRO{\dsum \limits_{p=0}^{\infty}}%
%BeginExpansion
{\displaystyle\sum\limits_{p=0}^{\infty}}
%EndExpansion
\frac{q_{p}(t)}{\binom{mn+p}{mn}}%
%TCIMACRO{\dsum \limits_{\sigma\in\mathbb{K}_{mn+1}^{p}}}%
%BeginExpansion
{\displaystyle\sum\limits_{\sigma\in\mathbb{K}_{mn+1}^{p}}}
%EndExpansion
\nu^{\circ_{m}A_{n}^{\sigma}}\right]\text{\ \ \ }%
.\]
Otherwise, the expression
\[
%
%TCIMACRO{\dsum \limits_{n\geq0}}%
%BeginExpansion
{\displaystyle\sum\limits_{n\geq0}}
%EndExpansion
p_{n}(t)\frac{1}{\#_{m}(n)}%
%TCIMACRO{\dsum \limits_{A_{n}\in\mathbb{A}_{n}}}%
%BeginExpansion
{\displaystyle\sum\limits_{A_{n}\in\mathbb{A}_{n}}}
%EndExpansion
\left[
%TCIMACRO{\dsum \limits_{p=0}^{\infty}}%
%BeginExpansion
{\displaystyle\sum\limits_{p=1}^{\infty}}
%EndExpansion
\frac{e^{-\gamma t}(\gamma t)^{p-1}}{(p-1)! \binom{mn+p}{mn}}%
%TCIMACRO{\dsum \limits_{\sigma\in\mathbb{K}_{mn+1}^{p}}}%
%BeginExpansion
{\displaystyle\sum\limits_{\sigma\in\mathbb{K}_{mn+1}^{p}}}
%EndExpansion
\nu^{\circ_{m}A_{n}^{\sigma}}\right]  \text{\ \ \ }%
\] is equal to the quantity

\[
%
%TCIMACRO{\dsum \limits_{n\geq0}}%
%BeginExpansion
{\displaystyle\sum\limits_{n\geq0}}
%EndExpansion
p_{n}(t)\frac{1}{\#_{m}(n)}%
%TCIMACRO{\dsum \limits_{A_{n}\in\mathbb{A}_{n}}}%
%BeginExpansion
{\displaystyle\sum\limits_{A_{n}\in\mathbb{A}_{n}}}
%EndExpansion
\left[
%TCIMACRO{\dsum \limits_{p=0}^{\infty}}%
%BeginExpansion
{\displaystyle\sum\limits_{p=0}^{\infty}}
%EndExpansion
\frac{e^{-\gamma t}(\gamma t)^{p}}{p! \binom{mn+p+1}{mn}}%
%TCIMACRO{\dsum \limits_{\sigma\in\mathbb{K}_{mn+1}^{p}}}%
%BeginExpansion
{\displaystyle\sum\limits_{\sigma\in\mathbb{K}_{mn+1}^{p+1}}}
%EndExpansion
\nu^{\circ_{m}A_{n}^{\sigma}}\right].  \text{\ \ \ }%
\]

But
$
{\displaystyle\sum\limits_{\sigma\in\mathbb{K}_{mn+1}^{p+1}}}
\nu^{\circ_{m}A_{n}^{\sigma}}  \text{\ \ \ }%
=
{\displaystyle\sum\limits_{\sigma'\in\mathbb{K}_{mn+1}^{p}}}
(\nu^{\circ_{m}A_{n}^{\sigma'}})^{\circ_{1}}  \text{\ \ \ }.%
$
And then

\[
%
%TCIMACRO{\dsum \limits_{n\geq0}}%
%BeginExpansion
{\displaystyle\sum\limits_{n\geq0}}
%EndExpansion
p_{n}(t)\frac{1}{\#_{m}(n)}%
%TCIMACRO{\dsum \limits_{A_{n}\in\mathbb{A}_{n}}}%
%BeginExpansion
{\displaystyle\sum\limits_{A_{n}\in\mathbb{A}_{n}}}
%EndExpansion
\left[
%TCIMACRO{\dsum \limits_{p=0}^{\infty}}%
%BeginExpansion
{\displaystyle\sum\limits_{p=0}^{\infty}}
%EndExpansion
\frac{e^{-\gamma t}(\gamma t)^{p}}{p! \binom{mn+p+1}{mn}}%
%TCIMACRO{\dsum \limits_{\sigma\in\mathbb{K}_{mn+1}^{p}}}%
%BeginExpansion
{\displaystyle\sum\limits_{\sigma'\in\mathbb{K}_{mn+1}^{p}}}
(\nu^{\circ_{m}A_{n}^{\sigma'}})^{\circ_{1}}\right]=\nu_{t}^{\circ_{1}}.  \text{\ \ \ }%
\]

Hence we get that $\frac{d\nu_{t}}{dt}$ has the desired form and the proof of the proposition
is now complete.

\end{proof}

We note that if $p=0,$ that is, if no investor changes its liquidity position, the
above solution does indeed  become the solution $(1).$

%$\nu_{t}=%
%{\displaystyle\sum\limits_{n\geq0}}
%e^{-\lambda t}(1-e^{-\lambda t})^{n}\frac{1}{n!}%
%{\displaystyle\sum\limits_{A_{n}\in\mathbb{A}_{n}}}
%\left[
%{\displaystyle\sum\limits_{p=0}^{\infty}}
%\frac{e^{-\gamma t}(\gamma t)^{p}}{p! \binom{mn+p+1}{mn}}%
%{\displaystyle\sum\limits_{\sigma\in\mathbb{K}_{mn+1}^{p+1}}}
%\nu^{\circ_{m}A_{n}^{\sigma}}\right]  \text{\ \ \ }.%
%$

\begin{remark}
In the specific context of the DGP model we have a binary kernel which simplifies the first part of the formula. But without the assumption $\gamma_u=\gamma_d$, we have to consider the up movements and the down movements separately, and this makes for a more complicated second part of the formula.  Let $\gamma=\gamma_u+\gamma_d$, 
the solution becomes

\[
\nu_{t}=%
%TCIMACRO{\dsum \limits_{n\geq0}}%
%BeginExpansion
{\displaystyle\sum\limits_{n\geq0}}
%EndExpansion
\frac{e^{-t}(1-e^{-t})}{n!}%
%TCIMACRO{\dsum \limits_{A_{n}\in\mathbb{A}_{n}}}%
%BeginExpansion
{\displaystyle\sum\limits_{A_{n}\in\mathbb{A}_{n}}}
%EndExpansion
\left[
%TCIMACRO{\dsum \limits_{p=0}^{\infty}}%
%BeginExpansion
{\displaystyle\sum\limits_{p=0}^{\infty}}
%EndExpansion
\frac{e^{-\gamma t}}{p!\binom{p}{k}\binom{mn+p+1}{mn}}%
%TCIMACRO{\dsum \limits_{\sigma\in\mathbb{K}_{mn+1}^{p}}}%
%BeginExpansion
\left[ {\displaystyle\sum\limits_{k=0}^{p}}(\gamma_{u})^{k}(\gamma_{d})^{p-k} {\displaystyle\sum\limits_{\underset{\sigma_{d}\in\mathbb{K}_{mn+1}^{p-k}}{\sigma_{u}\in\mathbb{K}_{mn+1}^{k}}}}
%EndExpansion
\nu^{\underset{m}\circ A_{n}^{\sigma_{u}\cup \sigma_{d}}}\right] \right]  \text{\ \ \ }%
\]

where $\sigma_{u}$ (resp $\sigma_{d}$) denotes the arrangements of up movements (resp. down movements) on the branches of the tree and $\sigma_{u}\cup \sigma_{d}$ is the arrangement obtained from both arrangements of up and down movements.

\end{remark}

\begin{remark}
We can obtain similar explicit formulas for OTC models where the interactions involve $m>2$ investors. In the information percolation model of Duffie-Malamud-Manso[7], for instance, the state space, $E=\mathbb{N}$ represents the potential levels of information acquired by an investor through meetings with other investors. The $m$-ary interaction is the perfect sharing of information which means that each investor in the meeting comes out with the sum of the information levels of all participating investors. The unary kernel is a regression force which replaces an investor of level $n$ say, by an investor with level $\pi(n)$ sampled from a given distribution $\pi$ on $\mathbb{N}$.
\end{remark}

\bigskip
\textbf{Acknowledgement:} This research is supported in part by a team grant
from Fonds de Recherche du Qu\'{e}bec - Nature et Technologies (FRQNT grant no. 180362).

\section{References}

\begin{enumerate}
\item B\'{e}langer, A. and Giroux, G., (2013), Some new results on information
percolation, Stochastic Systems, vol. 3, 1-10.

\item B\'{e}langer, A., Giroux, G. and Moisan-Poisson, M. (2013),
Over-the-Counter Market Models with Several Assets, Arxiv 1308.2957v1.

\item B\'{e}langer, A., Giroux, G. and Ndoun\'{e}, N. (2014), Existence of
Steady States for Over-the-Counter Market Models with Several Assets, Arxiv 1126039.

\item Carlen, E., Carvalho, M.C. and Gabetta, E. (2005), On the relation
between rates of relaxation and convergence of Wild sums for solutions of the
Kac equation, Journal of Functional Analysis, 220, no. 2, 362-387.

\item Duffie, D. (2012). Dark Markets: Asset Pricing and Information
Percolation in Over-the-Counter Markets. Princeton Lecture Series.

\item Duffie, D., G\^{a}rleanu, N. and Pedersen, L.H. (2005), Over-the-counter
markets, Econometrica 73, 1815-1847.

\item Duffie, D., Malamud, S. and Manso, G. (2009), Information percolation with equilibrium search dynamics, Econometrica 77, 1513-1574.

\item Kac, M. (1956). Foundations of kinetic theory. Proceedings of the Third
Berkeley Symposium on Mathematical Statistics and Probability, 1954--1955,
vol. \textbf{III}, pp. 171--197. University of California Press, Berkeley and
Los Angeles.

\item Kurtz, T. G.(1969). A Note on sequences of continuous parameter Markov
chains. Ann. Math. Statist., \textbf{40}, 1078-1082.

\item Lefebvre, M. (2006). Applied Stochastic Processes. Springer.

\item Tanaka, H. (1969), Propagation of Chaos for Certain Markov Processes of
Jump Type with Nonlinear Generators II, Pro. Japan Acad. 45, 598-600

\item Tanaka,\ S. (1968), An extension of Wild's Sum for Solving Certain
Non-Llinear Equation of Measures, Proc. Japan Acad. 44, 884-889.

\item Pareschi, L., Caflischt, R.E. and Wennberg, B. (1999), An Implicit Monte
Carlo Method for Rarefied Gas Dynamics, J.Comput.Physics. 154, 90-116.

\item Trazzi, S., Pareschi, L. and Wennberg, B. (2009), Adaptive and Recursive
Time Relaxed Monte Carlo Methods for Rarefied Gas Dynamics, SIAM J.Sci.Comput.
31(2), 1379-1398.

\item Wild, E (1951). On the Boltzmann equation in the kinetic theory of
gases. Proc. Cambridge Phil. Soc. \textbf{47}, 602-609.\bigskip

$%
\begin{array}
[c]{l}%
\text{Alain B\'{e}langer, Gaston Giroux, Ndoun\'{e} Ndoun\'{e}}\\
\text{D\'{e}partement de finance,}\\
\text{Universit\'{e} de Sherbrooke,}\\
\text{2 500 boul. de l'Universit\'{e},}\\
\text{Sherbrooke, Canada, J1K 2R1}\\
\text{E-addresses: alain.a.belanger@usherbrooke.ca}\\
\text{gasgiroux@hotmail.com, ndoune.ndoune@usherbrooke.ca}%
\end{array}
$
\end{enumerate}

\end{document}